\newcommand{\bp}{\begin{proof} \small }
\newcommand{\ep}{\end{proof} \normalsize}
\newcommand{\epx}{\end{proof} \small}
\newcommand{\bpa}{\begin{proofappx} \footnotesize }
\newcommand{\epa}{\end{proofappx} \small }
\newtheorem{theorem}{Theorem}
\newtheorem{proposition}{Proposition}
\newtheorem*{theorem*}{Theorem}
\newtheorem*{proposition*}{Proposition}
\newtheorem*{corollary*}{Corollary}
\newtheorem*{lemma*}{Lemma}
\newtheorem*{assumption*}{Assumption}
\newtheorem*{definition*}{Definition}
\newtheorem*{claim*}{Claim}
\newcommand{\be}{\begin{equation}}
\newcommand{\ee}{\end{equation}}
\newcommand{\bs}{\begin{subequations}}
\newcommand{\es}{\end{subequations}}
\newcommand{\bq}{\begin{eqnarray}}
\newcommand{\eq}{\end{eqnarray}}
\newcommand{\bqn}{\begin{eqnarray*}}
\newcommand{\eqn}{\end{eqnarray*}}
\newcommand{\ba}{\left[ \begin{array}}
\newcommand{\ea}{\\ \end{array} \right]}
\newcommand{\ben}{\begin{enumerate}}
\newcommand{\een}{\end{enumerate}}
\def\real{{\mathchoice%
{\hbox{\rm\setbox1=\hbox{I}\copy1\kern-.45\wd1 R}}
{\hbox{\rm\setbox1=\hbox{I}\copy1\kern-.45\wd1 R}}
{\hbox{\scriptsize\rm\setbox1=\hbox{I}\copy1\kern-.45\wd1 R}}
{\hbox{\scriptsize\rm\setbox1=\hbox{I}\copy1\kern-.45\wd1 R}}}}
\def\Zint{{\mathchoice{\setbox1=\hbox{\sf Z}\copy1\kern-.75\wd1\box1}
{\setbox1=\hbox{\sf Z}\copy1\kern-.75\wd1\box1}
{\setbox1=\hbox{\scriptsize\sf Z}\copy1\kern-.75\wd1\box1}
{\setbox1=\hbox{\scriptsize\sf Z}\copy1\kern-.75\wd1\box1}}}
\newcommand{\complex}{ \hbox{\rm C\kern-0.45em\rule[.07em]{.02em}{.58em}%
\kern 0.43em}}
\begin{document}
%
\title{E\textsuperscript{2}M\textsuperscript{2}: Energy Efficient Mobility Management in Dense Small Cells with Mobile Edge Computing}

\author{\IEEEauthorblockN{Jie Xu$^*$, Yuxuan Sun$^\dagger$, Lixing Chen$^*$, Sheng Zhou$^\dagger$}
\IEEEauthorblockA{$^*$Department of Electrical and Computer Engineering, University of Miami, USA\\
$^\dagger$Department of Electronic Engineering, Tsinghua University, China}}

\maketitle

\begin{abstract}
Merging mobile edge computing with the dense deployment of small cell base stations promises enormous benefits such as a real proximity, ultra-low latency access to cloud functionalities. However, the envisioned integration creates many new challenges and one of the most significant is mobility management, which is becoming a key bottleneck to the overall system performance. Simply applying existing solutions leads to poor performance due to the highly overlapped coverage areas of multiple base stations in the proximity of the user and the co-provisioning of radio access and computing services. In this paper, we develop a novel user-centric mobility management scheme, leveraging Lyapunov optimization and multi-armed bandits theories, in order to maximize the edge computation performance for the user while keeping the user's communication energy consumption below a constraint. The proposed scheme effectively handles the uncertainties present at multiple levels in the system and provides both short-term and long-term performance guarantee. Simulation results show that our proposed scheme can significantly improve the computation performance (compared to state of the art) while satisfying the communication energy constraint.
\end{abstract}


%
\IEEEpeerreviewmaketitle
\section{Introduction}
Mobile networks are experiencing a paradigm shift. A multitude of new technologies are being developed to accommodate the surging data demand from mobile users. Among these technologies, Ultra Dense Networking (UDN) \cite{quek2013small}\cite{chih2014toward} and Mobile Edge Computing (MEC) (a.k.a. Fog Computing) \cite{shi2016edge}\cite{roman2016mobile} are considered as key building blocks for the next generation mobile network (5G). UDN increases the network capacity through the ultra-dense deployment of small cell Base Stations (BSs), which is viewed as the key technology to realize the so-called 1000x capacity challenge \cite{chen2014requirements}. MEC provides access to cloud-like computing and storage resources at the edge of the mobile network, creating significant benefits such as ultra-low latency and precise location awareness, which are necessary for emerging applications such as Tactile Internet, Augmented Reality, Connected Cars and the Internet of Things. In many ways, UDN may provide the strongest case for MEC since small cell BSs are inherently edge-oriented. It is envisioned that small cell BSs endowed with cloud functionalities will be a major form of MEC service provision nodes \cite{scf2015}.

Although there are increasingly many works studying UDN and MEC, they are mostly separate efforts. However, significant new challenges are created by the envisioned integration of UDN and MEC, despite the enormous potential benefits. A key bottleneck to the overall system performance in this new paradigm is mobility management, which is the fundamental function of tracking mobile User Equipments (UEs) and associating them with appropriate BSs, thereby enabling mobile service to be delivered. Traditionally, mobility management was designed for macro cellular networks with infrequent handovers between cells, and providing radio access service only. Merging UDN and MEC drastically complicates the problem. Simply applying existing solutions leads to poor mobility management due to the highly overlapped coverage areas of multiple BSs in the proximity and the co-provisioning of radio access and computing services. In particular, what impedes efficient mobility management is the tremendous uncertainty at multiple levels:
\begin{itemize}
\item \textbf{Present uncertainties}. A key challenge is the unavailability of accurate information of candidate BSs in the proximity. Had the system know a priori which BSs offers the best performance (either throughput, energy consumption or computation latency), the mobility management decision making can be much easier, thus avoiding frequent handovers which lead to energy inefficiency.
\item \textbf{Future uncertainties}. A perhaps even bigger challenge is that the long-term energy budget couples the mobility management decisions across time, and yet the decisions have to be made without foreseeing the future (i.e. the future locations, channel conditions, available edge computing resources etc.)
\end{itemize}

\begin{figure}
  \centering
  \includegraphics[width=0.48\textwidth]{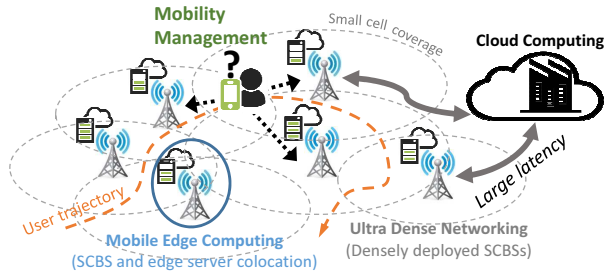}\\
  \caption{Mobility management in small cell-based MEC.}\label{system}
\end{figure}

Figure \ref{system} illustrates the mobility management problem in small cell-based MEC. In this paper, we develop a novel user-centric mobility management scheme, called E\textsuperscript{2}M\textsuperscript{2} (Energy-Efficient Mobility Management), that (i) smartly learns the best BSs to connect and makes proper handover decisions in an online fashion, and (ii) satisfies both short-term and long-term performance requirement. This is mainly achieved by leveraging and extending Lyapunov Optimization \cite{neely2010stochastic} and Multi-armed Bandits \cite{auer2002finite}. In particular, we provide strong performance guarantee by employing our scheme: (1) the computing performance is within a bounded deviation of the optimal performance that can be achieved by an imaginary oracle that knows all information in advance; (2) the long-term energy consumption constraint is approximately satisfied within a bounded deviation.

\section{Related Work}
Mobile edge computing has received an increasing amount of attentions in recent years. A central theme of many prior studies is offloading policies, i.e. what/when/how to offload a user's workload from its device to the edge system or cloud (see \cite{huang2012dynamic}\cite{satyanarayanan2009case} and references therein). Our paper does not study offloading policies and can work in conjunction with any offloading policy. There is also extensive work on dense small cell networks, mainly focusing on radio resource management and interference mitigation (see \cite{shen2015silence}\cite{torregoza2010joint} and references therein). Our paper is also orthogonal to this literature and does not impose any assumption on how the network is planned or how the radio resource is allocated among the cells (hence we allow inter-cell interference in our model).

Much of the mobility management research has been based on optimization theory \cite{ye2013user}\cite{mesodiakaki2014energy}\cite{bottai2014energy}\cite{althunibat2014handover}. For instance, a utility maximization problem is formulated for the optimal user association \cite{ye2013user}. A handover policy is proposed in \cite{althunibat2014handover} in which low energy efficiency from the serving BS triggers a handover. These solutions have been proved effective for less-densified heterogeneous networks, but they may perform poorly when the network density becomes high as frequent handovers may occur. To address this challenge, a learning-based mobility management scheme is proposed in \cite{shen2016non} based on the multi-armed bandits technique \cite{auer2002finite}. This scheme is user-centric, which has been an emerging trend of designing mobility management, particularly for the future 5G standard \cite{boccardi2014five}. However, all these works merely consider the radio access aspect of BSs. Endowing BSs with MEC capabilities requires new mobility management solutions.

To deal with future uncertainties, we leverage Lyapunov optimization \cite{neely2010stochastic},which has wide applications in communication networks and queueing systems. Our scheme builds upon yet extends Lyapunov optimization by integrating Multi-armed Bandits \cite{auer2002finite} to solve the optimization task in each time period due to the lack of accurate system information.

\section{System Model}
\subsection{Network Model}
We consider an ultra-dense cellular network with $N$ Base Stations (BS) (which can be macro BSs or small cell BSs), indexed by $\mathcal{N}=\{1,2,...,N\}$.  The network is divided into $M$ disjoint regions, indexed by $\mathcal{M}=\{1,2,...,M\}$.  Each region $m \in \mathcal{M}$ is covered by a subset of BSs, denoted by $\mathcal{A}(m) \subseteq \mathcal{N}$. We consider a representative mobile user that is moving in the network for $T$ time periods. In period $t$, the location of the user is denoted by $L^t \in \mathcal{M}$. Therefore, $\mathcal{L} = L^1\to L^2 \to ...\to L^T$ is the trajectory of the user.  This trajectory can be generated by any mobility model, such as the random waypoint model, and we do not require any prior knowledge about the trajectory.

At each location $L^t$ in time period $t$, the mobility management system makes decisions on which BS the user is served by, among the discoverable BSs $\mathcal{A}(L^t)$ in the vicinity.  The mobility decision is traditionally made at the BS (e.g. X2 handover in LTE) or Evolved Packet Core (e.g. S1 handover at the Mobility Management Entity in LTE). Recently, there has been an emerging trend of designing user-centric mobility management, particularly for the future 5G standard \cite{boccardi2014five}. In this work, we consider user-centric mobility management and let the user make mobility decisions.

\subsection{Workload and Service Model}
In each period $t$, the user has computation workloads with arrival rate $\lambda^t \in [0, \lambda_\text{max}]$ that needs to be offloaded to the mobile edge server for processing, where $\lambda_\text{max}$ is the maximum possible arrival rate. We assume that each workload has an equal size $\gamma$. As assumed in prior work, $\lambda^t$ is available at the beginning of each time period $t$. We focus on delay-sensitive interactive workloads, which are the main application scenario of MEC.

Each BS is co-located with an edge server. The maximum service rate of BS $n$ in time period $t$ is denoted by $s^t(n)$. The total workload offloaded to BS $n$ by other users is denoted by $\mu^t(n)$, which we refer to as background workload.  By offloading $\lambda^t$ to a BS $n \in \mathcal{A}(L^t)$,  the delay cost incurred to the representative user can be captured by a general function $d(\lambda^t, \mu^t(n), s^t(n))$. As a concrete example, we model the service process at each edge server as a M/G/1/PS (Memoryless/General/1/Processor-Sharing) queue and the average response time (multiplied by the arrival rate) to represent the delay cost. Specifically, the delay cost in period $t$ can be written as \cite{ren2013coca}
\begin{align}
d(\lambda^t, \mu^t(n), s^t(n)) = \frac{\lambda^t}{s(n) - (\mu^t(n) + \lambda^t)}
\end{align}
in which we ignore the transmission delay which is typically small since the BS is in close proximity.  While the M/G/1/PS queuing model may not capture the exact response time in practice, it has been widely used as an analytic vehicle to provide a reasonable approximation for the actual service process. It is also possible that the workload is further offloaded a remote cloud or distributed among a set of edge servers. Our framework is able to capture this scenario by using a proper delay cost function $d(\cdot)$ that absorbs this additional delay. For simplicity, this paper focuses on processing of computation workload on local edge servers.

\subsection{Channel and Power Consumption Model}
Since the considered period is relatively long, we consider only slow fading for the wireless channel for uploading the workload to the edge server. Let $H^t(n, m)$ denote the wireless channel state between region $m$ and BS $n \in \mathcal{A}_m$ in period $t$. Given the transmission power $P_{tx}$ of the user, the maximum achievable uplink transmission rate is given by the Shannon's theorem,
\begin{align}
r(H^t(n, m)) = W\log_2\left(1 + \frac{P_{tx}H^t(n, m)}{\sigma^2 + I^t(n)}\right)
\end{align}
where $W$ is the channel bandwidth and $\sigma^2$ is the noise power and $I^t(n)$ is the possible inter-cell interference in time period $t$. The required transmission energy for uploading workload with arrival rate $\lambda^t$ to BS $n$ in a period is therefore
\begin{align}
e(\lambda^t, H^t(n, L^t)) = \frac{P_{tx} \lambda^t  \gamma}{W\log_2\left(1 + \frac{P_{tx}H^t(n, L^t)}{\sigma^2  + I^t(n)}\right)}
\end{align}


\section{Problem Formulation}
We first formulate the problem by assuming that the user has complete information of the system for all time periods $t = 0, 1, ...,T-1$ in advance at time $t = 0$. Therefore, the user essentially is solving an offline optimization problem. However, in the real system, information in period $t$ can only be known in period $t$. Therefore, we will subsequently develop online algorithms and show that they are efficient compared to the optimal offline algorithm.  For the online algorithms, we will consider two practical scenarios depending on the availability of information. The information in each period $t$ can be classified into two categories:
\begin{itemize}
\item \textbf{User-side State Information}: The user's location $L^t$, the available candidate BSs $\mathcal{A}^t$ and the workload rate $\lambda^t$.
\item \textbf{BS-side State Information}: For each $n \in \mathcal{A}^t$, its background workload $\mu^t(n)$, its maximum service rate $s^t(n)$ and the uplink channel condition $H^t(n, L^t)$.
\end{itemize}
The first deployment scenario assumes that the user has both the user-side information and the BS-side information.  The second deployment scenario assumes that the user has only the user-side information.

\subsection{Offline Problem Formulation}
The user performs mobility management to maximize its computing performance (i.e. to minimize the delay cost) subject to an energy consumption budget (e.g. since battery is limited). Formally, the problem is as follows
\begin{align}
\textbf{P1:}&~~\min_{a^0,...,a^{T-1}} ~\frac{1}{T}\sum_{t=0}^{T-1} d(\lambda^t, \mu^t(a^t), s^t(a^t))\\
\text{s.t.} &~~\sum_{t=0}^{T-1} e(\lambda^t, H^t(a^t, L^t)) \leq \alpha B \label{budget}\\
&~~ r(H^t(a^t, L^t)) \geq r_\text{min}, \forall t \label{minrate}\\
&~~d(\lambda^t, \mu^t(a^t), s^t(a^t)) \leq d_\text{max}, \forall t \label{maxdelay}\\
&~~a^t \in \mathcal{A}(L^t) \label{coverage}
\end{align}
The objective is to minimize the average delay cost, which is the key performance metric for MEC applications which require real-time interaction, by deciding which BS the user should be served by in each time period.  The first constraint \eqref{budget} states that the total energy consumption is limited by the available battery during the current trip of the user, where $\alpha \in (0, 1]$ indicates the desired capping of energy consumption relative to the total battery capacity $B$. The second constraint \eqref{minrate} requires that the uploading transmission rate meets a minimum rate $r_\text{min}$ so that the transmission delay is negligible. The third constraint \eqref{maxdelay} requires that the per-slot delay does not exceed an upper limit $d_\text{max}$ so the real-time performance is guaranteed in the worst case. The last constraint \eqref{coverage} states that the candidate BSs are those that cover  $L^t$.

The first major practical challenge that impedes derivation of the optimal solution to \textbf{P1} is the lack of future information: optimally solving \textbf{P1} requires complete offline information (including the user-side and the BS-side information) over the entire trip periods that is extremely difficult, if not impossible, to accurately predict in advance. Furthermore, \textbf{P1} belongs to integer nonlinear programming and is difficult to solve, even if the long-term future information is accurately known a priori. Thus, these challenges demand an online approach that can efficiently make mobility management decisions without foreseeing the far future.

\subsection{$J$-Step Lookahead Algorithm}
Because the system state information (i.e. the location, the workload, the channel state etc.) can follow an arbitrary sample path, we use the $J$-step lookahead algorithm as an \textbf{benchmark} to evaluate our online algorithms. Specifically, we divide the entire trip duration into $R \geq 1$ frames, each having $J \geq 1$ time periods such that $T = RJ$. For the $r$-th frame, we define $D^*_r$ as the optimal delay cost associated with the following static optimization problem.
\begin{align}
\textbf{P2:}&~~\min_{a^{rJ},...,a^{(r+1)J-1}}~ \frac{1}{J}\sum_{t=rJ}^{(r+1)J -1} d(\lambda^t, \mu^t(a^t), s^t(a^t))\\
\text{s.t.} &~~\sum_{t=rJ}^{(r+1)J -1} e(\lambda^t, H^t(a^t, L^t)) \leq \frac{\alpha B}{R} \label{budgetJ}\\
&~~\text{constraints \eqref{minrate}, \eqref{maxdelay}, \eqref{coverage}}
\end{align}
The value $D^*_r$ thus represents the optimal empirical average delay cost for frame $r$ over all policies that have full knowledge of the future system state information over the frames and that satisfy the constraints. We assume that for every frame, there exists at least one sequence of (possibly randomized) mobility management decisions that satisfy the constraints of \textbf{P2}. Let $a^{rJ,*}, ...a^{(r+1)J-1,*}$ denote the sequence of the optimal mobility management decisions that achieve $D^*_r$. The minimum long-term average delay cost achieved by the oracle's optimal $J$-step lookahead algorithm is thus given by $D^*=\frac{1}{R}\sum_{r=0}^{R-1}D^*_r$.

\section{Online Mobility Management}
In this section, we develop online mobility management algorithms that do not require future system state information. We consider two deployment scenarios depending on whether the user has the exact BS-side state information.

\subsection{E\textsuperscript{2}M\textsuperscript{2} Algorithm with Full State Information}
We first consider the case in which the user has full state information in each period. A significant challenge of directly solving \textbf{P1} is that the long-term energy budget couples the mobility management decisions across different time periods: using more energy at the current time will potentially reduce the energy budget available for future uses, and yet the decisions have to be made without foreseeing the future. To address this challenge, we leverage Lyapunov optimization and construct a virtual energy deficit queue to guide the mobility management decisions. Specifically, let $q(0) = 0$, the dynamics of the energy deficit queue evolves as
\begin{align}\label{queue}
q(t+1) = \{q(t) + e(\lambda^t, H^t(a^t, L^t)) - \alpha B/T\}^+
\end{align}
The virtual queue length $q(t)$ indicates how far the current energy usage deviates from the battery energy budget.  The algorithm is shown in Algorithm 1.

E\textsuperscript{2}M\textsuperscript{2} is an online algorithm because it requires only the currently available information as the inputs.  We use $V_0, V_1, ..., V_{R-1}$ to denote a sequence of positive control parameters to dynamically adjust the tradeoff between delay cost minimization and energy consumption over the $R$ frames, each having $J$ periods. Lines 2 - 4 reset the energy deficit virtual queue at the beginning of each frame. Line 5 defines an online optimization problem \textbf{P3} to decide the mobility management decisions.  The optimization problem aims to minimize a weighted sum of the delay cost and energy consumption where the weight depends on the current energy deficit queue length. A large weight will be placed on the energy consumption if the current energy deficit is large. The energy deficit queue maintained without foreseeing the future guides the mobility management decisions towards meeting the battery energy constraint, thereby enabling online decisions.

\begin{algorithm}
\caption{E\textsuperscript{2}M\textsuperscript{2} with Full State Information (FSI)}
\begin{algorithmic}[1]
\State \textbf{Input}: $L^t$, $\mathcal{A}^t$, $\lambda^t$, $h^t$, and $\forall n \in \mathcal{A}^t$, $\mu^t(n)$, $s(n)$, $H^t(n, L^t)$ at the beginning of each $t$.
\If{$t = rJ, \forall r = 0,1,...,R-1$}
\State $q(t) \leftarrow 0$ and $V \leftarrow V_r$
\EndIf
\State Choose $a^t$ subject to \eqref{minrate}, \eqref{maxdelay}, \eqref{coverage} to minimize
\begin{align}
(\textbf{P3})~~V\cdot d(\lambda^t, \mu^t(a^t), s(a^t)) + q(t)\cdot e(\lambda^t, H^t(a^t, L^t)) \nonumber
\end{align}
\State Update $q(t)$ according to \eqref{queue}.
\end{algorithmic}
\end{algorithm}

\subsection{E\textsuperscript{2}M\textsuperscript{2} Algorithm with Partial State Information}
In many deployment scenarios, the user knows only the user-side information but not the BS-side information.  This creates a major difficulty in solving \textbf{P3} exactly in each period $t$.  For notational convenience, we define $d(n) = d(\lambda^t, \mu^t(n), s(n))$ and $e(n) = e(\lambda^t, H^t(n, L^t))$ for all $n \in \mathcal{A}(L^t)$, and $Z(n) = V\cdot d(n) + q(t)\cdot e(n)$. In each period $t$, the optimal BS for serving the user is thus
\begin{align}
a^{t,*} = \arg\min_n \{Z(n)\}
\end{align}
Since $\mu^t(n)$, $s^t(n)$ and $H^t(n, L^t)$ for any $n \in \mathcal{A}(L^t)$ are unknown by the user, the values of $d(n)$ and $e(n)$ and hence $Z(n)$ cannot be evaluated exactly. In this section, we augment our E\textsuperscript{2}M\textsuperscript{2} algorithm with an online learning algorithm that learns $a^{t,*}$ without requiring the exact BS-side information.

In order to learn the optimal mobility management decision, we divide each period $t$ into $K$ slots.  In each time slot within a time period, the learning algorithm connects the user with one BS $a^t_k \in \mathcal{A}(L^t)$. At the end of the time slot $k$, the user observes the delay $\tilde{d}^t_k$ and energy consumption $\tilde{e}^t_k$, which serve as feedback information to guide the learning of the optimal BS. Let $z^t_k = V\cdot \tilde{d}^t_k + q(t)\cdot \tilde{e}^t_k$. If $z^t_k$ were exactly $\frac{1}{K} Z(a^t)$, then learning the optimal $a^{t,*}$ would be simple:  use each $n \in \mathcal{A}(L^t)$ for one slot in a round-robin fashion and keep connecting to the one that minimizes $z^t_k$ for the remaining $K - |\mathcal{A}(L^t)|$ slots. However, due to the variance in workload arrivals and channel states, $z^t_k$ is only a noisy version of $\frac{1}{K} Z(a^t)$. As a result, the above simple learning algorithm can be very suboptimal since we may get trapped in a BS that results in a large $Z(n)$. In fact, this problem is a classic sequential decision making problem that involves a tradeoff between exploration and exploitation -- the learning algorithm needs to explore the different BSs to learn good estimates of $Z(n), \forall n$ while at the same time trying to connect to the optimal BS as long as possible. This problem is extensively studied under the multi-armed bandits framework and many learning algorithms have been developed with performance guarantee. In this paper, we augment our  E\textsuperscript{2}M\textsuperscript{2} algorithm with the widely adopted UCB1 algorithm \cite{auer2002finite} to learn the optimal BS under uncertainty. UCB1 is an index-based algorithm which, in each slot $k$, selects the BS with the largest index. The index for BS $n$ is an upper confidence bound on the empirical estimate of $Z(n)$. Nevertheless, other learning algorithms can also be incorporated in our framework.

The algorithm is shown in Algorithm 2. The major difference from Algorithm 1 is that instead of solving \textbf{P3} exactly, we use the UCB1 algorithm to learn the optimal BS to minimize the objective in \textbf{P3}, which is reflected in Lines 5 through 13.

\begin{algorithm}
\caption{E\textsuperscript{2}M\textsuperscript{2} with Partial State Information (PSI)}
\begin{algorithmic}[1]
\State \textbf{Input}: $L^t$, $\mathcal{A}^t$, $\lambda^t$, $h^t$ at the beginning of each $t$.
\If{$t = rJ, \forall r = 0,1,...,R-1$}
\State $q(t) \leftarrow 0$ and $V \leftarrow V_r$
\EndIf
\For{Each $n\in\mathcal{A}(L^t)$}
\State $\bar{z}^t(n) \leftarrow 0, \theta^t(n) \leftarrow 0$
\EndFor
\For{$k = 1,...,K$}    \Comment{\textit{UCB1 Learning}}
\State Connect to $n^{*}_k = \arg\min_{n} \bar{z}^t(n) - \sqrt{\frac{\alpha\ln k}{\theta^t(n)}}$
\State Observe $\tilde{d}^t_k$ and $\tilde{e}^t_k$
\State Update $\bar{z}^t(n^*_k) \leftarrow \frac{\theta(n^*_k)\bar{z}^t(n^*_k) + V\tilde{d}^t_k + q(t)\tilde{e}^t_k}{\theta^t(n^*_k)+1}$
\State Update $\theta^t(n^*_k) \leftarrow \theta^t(n^*_k) + 1$
\EndFor
\State Update $q(t)$ according to \eqref{queue}.
\end{algorithmic}
\end{algorithm}

\section{Performance Analysis}
We only characterize the performance of the E\textsuperscript{2}M\textsuperscript{2} with PSI because E\textsuperscript{2}M\textsuperscript{2} with FSI can be considered as a special case.

First, we characterize the performance of the UCB1 algorithm, which is used to approximately solve \textbf{P3}. Let $\tilde{Z}^t = \sum_{k=1}^K \tilde{z}^t_k$ be the empirical total weighted sum of delay cost and energy consumption obtained by running the UCB1 algorithm. Let $Z^{t,*} = Z(a^{t,*})$ be the optimal value with full information.

\begin{proposition}
If each period is divided into $K$ slots and let $\alpha = 2 (z^t_{max})^2$, then
\begin{align}
&\mathbb{E}\tilde{Z}^t - Z^{t,*}  \nonumber\\
\leq &z^t_{max}\left[8\sum_{n\neq a^{t,*}}\left(\frac{\ln K}{\delta^t(n)}\right) + \left(1+ \frac{\pi^2}{3}\right)\sum_{n\neq a^{t,*}}\delta^t(n) \right]
\end{align}
\end{proposition}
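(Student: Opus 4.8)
The plan is to recognize the inner learning loop of Algorithm 2 as an instance of UCB1 run on the bandit whose arms are the candidate base stations $n\in\mathcal{A}(L^t)$, whose per-slot cost is the bounded random variable $\tilde{z}^t_k = V\tilde{d}^t_k + q(t)\tilde{e}^t_k \in [0, z^t_{max}]$, and whose goal is \emph{minimization} (this is why Line~9 selects the smallest lower-confidence index rather than the largest upper one). Fixing the period $t$, I would first set the true per-slot mean of arm $n$ to be $\tfrac{1}{K}Z(n)$, so that the minimizing arm coincides with $a^{t,*}=\arg\min_n Z(n)$, and record the per-slot gap $g_n := \tfrac{1}{K}\bigl(Z(n)-Z^{t,*}\bigr)$ for each suboptimal $n$. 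The quantity $\delta^t(n)$ in the statement is this gap measured on the normalized $[0,1]$ scale, i.e.\ $\delta^t(n)=g_n/z^t_{max}$; carrying the normalization explicitly is exactly what produces the single prefactor $z^t_{max}$ in the final bound.

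Next I would write the regret decomposition. Let $N_n$ be the (random) number of slots in which arm $n$ is played. By the tower rule, conditioning on the chosen arm, $\mathbb{E}\tilde{Z}^t = \sum_n \mathbb{E}[N_n]\cdot \tfrac{1}{K}Z(n)$, and since $\sum_n N_n = K$ and $Z^{t,*}=K\cdot\tfrac{1}{K}Z(a^{t,*})$, the optimal-arm terms cancel and this telescopes to
\begin{align}
\mathbb{E}\tilde{Z}^t - Z^{t,*} = \sum_{n\neq a^{t,*}} \mathbb{E}[N_n]\, g_n . \nonumber
\end{align}
Thus the entire claim reduces to bounding $\mathbb{E}[N_n]$ for each suboptimal arm.

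The core step, and the main technical obstacle, is establishing $\mathbb{E}[N_n] \le \frac{8(z^t_{max})^2\ln K}{g_n^2} + 1 + \frac{\pi^2}{3}$. This is the standard UCB1 counting argument adapted to the range $[0,z^t_{max}]$: a suboptimal arm $n$ is selected in slot $k$ only if its lower-confidence index has fallen below that of $a^{t,*}$, which forces at least one of three events --- the empirical mean of $n$ underestimates its true mean by more than the confidence radius, the empirical mean of $a^{t,*}$ overestimates its true mean by more than the radius, or $n$ has been sampled fewer than $\ell_n:=\lceil 8(z^t_{max})^2\ln K / g_n^2\rceil$ times. Once $n$ has been sampled $\ell_n$ times the gap exceeds twice the radius, ruling out the third event and contributing at most $\ell_n$ plays; the first two events are controlled by Hoeffding's inequality, and here the choice $\alpha = 2(z^t_{max})^2$ is precisely calibrated so each tail decays like $k^{-4}$, whence the double sum over the two empirical-mean sample counts (each up to $k$) and over the slots converges to at most $\frac{\pi^2}{3}$, while the $+1$ comes from the ceiling in $\ell_n$. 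This step implicitly uses that the per-slot observations are independent across slots given the chosen arm and uniformly bounded by $z^t_{max}$, an assumption I would state explicitly.

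Finally I would substitute this bound into the decomposition. Each suboptimal term becomes $g_n\bigl(\frac{8(z^t_{max})^2\ln K}{g_n^2}+1+\frac{\pi^2}{3}\bigr)=\frac{8(z^t_{max})^2\ln K}{g_n}+(1+\frac{\pi^2}{3})g_n$, and rewriting $g_n = z^t_{max}\,\delta^t(n)$ turns these into $z^t_{max}\cdot\frac{8\ln K}{\delta^t(n)}$ and $z^t_{max}(1+\frac{\pi^2}{3})\delta^t(n)$ respectively. Summing over $n\neq a^{t,*}$ and factoring out $z^t_{max}$ yields the stated inequality.
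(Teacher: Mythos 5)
Your proof is correct and is essentially the paper's own approach: the paper's proof is a one-line citation to the regret analysis of UCB1 (Auer et al.), and what you give is precisely that analysis reconstructed in detail --- the regret decomposition over expected pull counts, the counting argument with Hoeffding's inequality under the calibration $\alpha = 2(z^t_{max})^2$ so that tails decay like $k^{-4}$, and the resulting $1+\pi^2/3$ series --- correctly adapted to minimization with costs bounded by $z^t_{max}$. One substantive remark: you interpret $\delta^t(n)$ as the \emph{normalized} gap $g_n/z^t_{max}$, whereas the paper defines $\delta^t(n) = (Z^t(n)-Z^{t,*})/K$, which is your $g_n$ itself; under the paper's literal definition the stated bound does not scale consistently (the logarithmic term would need a $(z^t_{max})^2$ prefactor and the gap term none), so your reading is the one under which the proposition is actually true, and that quiet repair of the paper's statement deserves to be made explicit rather than implicit.
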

where $\delta^t(n) = (Z^t(n) - Z^{t,*})/K$ and $z^t_{max} = \sup z^t_k$.
\begin{proof}
This is based on the regret analysis of UCB1 \cite{auer2002finite}.
\end{proof}
Proposition 1 states that our learning algorithm approximately solves \textbf{P3} with a bounded deviation. Therefore, there exists a constant $C$  such that $\tilde{Z}^t - Z^{t,*} \leq C, \forall t$.

Next, we characterize the performance of E\textsuperscript{2}M\textsuperscript{2}  with PSI.

\begin{theorem}
For any $J \in \mathbb{Z}_+$ and $R \in \mathbb{Z}_+$ such that $T = RJ$, the following statements hold.

(1) The average delay cost achieved by E\textsuperscript{2}M\textsuperscript{2} with PSI satisfies
\begin{align}
d^* \leq \frac{1}{R}\sum_{r=0}^{R-1}D^*_r + \frac{U(J+1) + C}{R}\sum_{r=0}^{R-1}\frac{1}{V_r}
\end{align}
where $U$ is a finite constant.

(2) The energy consumption constraint is approximately satisfied with a bounded deviation:
\begin{align}
&\sum_{t=0}^{T-1}e(\lambda^t, H^t(a^t, L^t)) \nonumber\\
\leq & \alpha B + \sum_{r=0}^{R-1} \sqrt{2UJ(J+1) + CJ + V_rJD^*_r}
\end{align}
\end{theorem}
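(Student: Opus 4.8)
The plan is to use the standard Lyapunov drift-plus-penalty methodology, adapted to the frame-based structure with the bandit approximation error $C$ folded in. I would work frame by frame, since the virtual queue $q(t)$ is reset to zero at the start of each frame $r$ (Lines 2--4 of Algorithm 1) and a separate control parameter $V_r$ is used. First I would define the quadratic Lyapunov function $\Phi(t) = \tfrac{1}{2}q(t)^2$ and compute the one-slot drift $\Phi(t+1) - \Phi(t)$. Using the queue update \eqref{queue} and the elementary inequality $(\{x\}^+)^2 \le x^2$, the drift is bounded by $q(t)\big(e(\lambda^t,H^t(a^t,L^t)) - \alpha B/T\big) + \tfrac{1}{2}\big(e(\lambda^t,H^t(a^t,L^t)) - \alpha B/T\big)^2$. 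Because $\lambda^t$, $P_{tx}$, $\gamma$ are bounded and the rate is bounded below by $r_\text{min}$ via \eqref{minrate}, the energy term is uniformly bounded, so the squared term is at most some finite constant; this is where the constant $U$ enters.

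Next I would form the \emph{drift-plus-penalty} expression by adding $V_r \cdot d(\lambda^t,\mu^t(a^t),s(a^t))$ to the drift bound and summing over the $J$ slots of frame $r$. Here is the crucial link to Proposition~1: the PSI algorithm does not minimize \textbf{P3} exactly but runs UCB1, whose expected objective $\mathbb{E}\tilde{Z}^t$ exceeds the exact optimum $Z^{t,*}$ by at most $C$. So when I compare the algorithm's achieved drift-plus-penalty against that of \emph{any} feasible policy --- in particular the $J$-step lookahead oracle achieving $D^*_r$ --- I would invoke the inequality $\mathbb{E}\tilde{Z}^t - Z^{t,*} \le C$ to absorb the learning error as an additive $C$ per slot. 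The oracle's decisions satisfy the per-frame budget $\sum_t e \le \alpha B/R$, so the term $q(t)\,e(\cdot)$ for the oracle telescopes favorably against $q(t)\cdot \alpha B/T$; summed over $J$ slots this yields the $q$-dependent terms that will be controlled by the queue bound.

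For part (1), I would sum the per-frame drift-plus-penalty inequalities over $r=0,\dots,R-1$. Since $q(rJ)=0$ at each frame start and $\Phi \ge 0$, the telescoped Lyapunov terms within each frame are non-negative and drop out, leaving $\tfrac{1}{J}\sum_{t} \mathbb{E}\,d(\cdot) \le D^*_r + \tfrac{U(J+1)+C}{V_r}$ for each frame, where the $U(J+1)$ collects the $J$ copies of the drift constant plus one boundary term. Averaging over the $R$ frames with weights $1/R$ gives the stated bound on $d^*$. For part (2), I would return to the energy bound and use the fact that the drift-plus-penalty over a frame controls the \emph{growth} of $q$: from $q(rJ)=0$, the terminal queue length $q((r+1)J)$ satisfies $\tfrac{1}{2}q((r+1)J)^2 \le UJ(J+1) + CJ + V_rJ D^*_r$ after bounding the penalty sum by $V_r J D^*_r$ (the oracle's delay). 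Solving for $q((r+1)J)$ gives the square-root term $\sqrt{2UJ(J+1)+CJ+V_rJD^*_r}$, and relating the accumulated energy overshoot in frame $r$ to this terminal queue length (the queue records exactly the excess of used energy over the $\alpha B/T$ per-slot allowance) and summing over frames produces the claimed bound $\alpha B + \sum_r \sqrt{\cdots}$.

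The main obstacle I anticipate is \emph{making the oracle comparison rigorous in expectation while carrying the bandit error $C$ consistently}. The subtlety is that \textbf{P3}'s objective $Z(n)$ depends on the \emph{current} realized queue length $q(t)$, which is itself a random quantity shaped by past learning decisions; so I must be careful that the per-slot inequality $\mathbb{E}\tilde{Z}^t \le Z^{t,*} + C$ is applied conditionally (given $q(t)$ and the history) before the drift terms are assembled, and that $z^t_{max}$ and hence $C$ remain uniformly bounded across slots --- which holds because $d \le d_\text{max}$ by \eqref{maxdelay}, $e$ is bounded, and $q(t)$ stays finite within a frame. A secondary technical point is verifying that the elementary queue inequality correctly attributes the energy overshoot to $q$ even though $\{\cdot\}^+$ can truncate; this requires the standard observation that $q(t+1) \ge q(t) + e(\cdot) - \alpha B/T$, so summing telescopes to lower-bound the terminal queue by the cumulative energy excess.
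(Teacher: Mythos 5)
Your proposal matches the paper's own proof essentially step for step: the same per-frame quadratic Lyapunov drift-plus-penalty argument with the queue reset at frame boundaries, the same finite constant $U$ bounding $\tfrac{1}{2}y^2(t)$, the same absorption of the UCB1 approximation error as an additive $C$ per slot (yielding the $CJ$ term) when comparing against the $J$-step lookahead oracle, and the same two endgames --- part (1) by dividing the frame inequality by $V_r$ and using $q(rJ)=0$ with nonnegativity of the Lyapunov function, part (2) by solving the terminal-queue bound $\tfrac{1}{2}q((r+1)J)^2 \le UJ(J+1)+CJ+V_rJD^*_r$ and combining it with the telescoped lower bound $q(t+1) \ge q(t) + e(\cdot) - \alpha B/T$ that survives the $\{\cdot\}^+$ truncation. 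The subtleties you flag (conditional application of the bandit bound given $q(t)$, and the uniform boundedness ensuring $C$ and $U$ exist) are exactly the ones the paper relies on, so there is no gap to report.
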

\begin{proof}
See Appendix.
\end{proof}

Theorem 1 proves a strong performance guarantee of our mobility management scheme: the computing performance in terms of delay cost is within a bounded deviation of the optimal performance that can be achieved by an imaginary oracle that knows future information and has the computational power the solve the offline optimization problem, while the long-term energy consumption constraint is approximately satisfied within a bounded deviation.

\section{Simulations}
In this section, we carry out simulation studies to validate our analysis and evaluate the performance of the proposed mobility management scheme.

\subsection{Simulation Setup}
We simulate a 1000m$\times$1000m square area in which 25 BSs are deployed on a regular grid network. The distance between two adjacent BSs is 160m. We used an adjusted random walk model to generate user trajectories. The mobility model is adjusted to avoid frequent moving back and forth in order to better capture the real world scenario. Since BSs are densely deployed, the user is able to observe multiple candidate BSs and associate to any one of them. The association radius is set to be 250m. Each time period is 5 minutes. The arrival rate of the user computational workload is set to be $\lambda^t \in [0, 12]$ per period. The edge server service rate is $s(n) = 50$ per period for all $n$ and the background workload is $\mu(n) \in [0, 40]$ per period. The wireless channel state $H^t$ is modeled using a pathloss model $P = 25.3 + 37.6\times\log d$, which is suggested in \cite{access2010further} for system simulations of small cells and heterogenous networks. Other default parameters for the simulation are: noise power $\sigma^2 = 10^{-10} \text{W/Hz}$, channel bandwidth $W = 20 \text{M/Hz}$, minimum transmit rate $r_{min} = 100 \text{Mbps}$, maximum delay $d_{max} = 5 \text{sec}$, workload size $\gamma = 1\text{MB}$, user transmit power $P_{tx} = 0.1\text{W}$.

\subsection{Results}
\subsubsection{Performance evaluation}
We compare the performance of E\textsuperscript{2}M\textsuperscript{2} with three benchmark schemes:
\begin{itemize}
\item \textbf{Delay Optimal}: the user always associates with the BS with the lowest delay cost and disregards the energy consumption constraint.
\item \textbf{Energy Optimal}: the user always associates with the BS with the best channel condition without considering the computing performance
\item \textbf{J-step Lookahead}: this is the offline algorithm described in Section IV.B, we set $R = 250$ and $J = 4$. Note that solving the offline problem is extremely computationally complex even if the future information is known.
\end{itemize}

Figure \ref{Performance_E2M2} shows the delay and energy performance of E\textsuperscript{2}M\textsuperscript{2} with FSI and PSI, and the benchmark schemes. As can be seen, \textbf{Delay Optimal} achieves the best delay performance at the cost of violating the energy budget constraint. \textbf{Energy Optimal} uses energy conservatively by always connecting to the BSs with the best channel condition. However, the resulting delay cost is significantly higher than other schemes. Both \textbf{J-step Lookahead} and our two E\textsuperscript{2}M\textsuperscript{2} algorithms satisfy the energy consumption constraint while keeping the delay cost low. Among them, \textbf{J-step Lookahead} is the best as expected. Despite that the proposed E\textsuperscript{2}M\textsuperscript{2} algorithms require no future information and computationally simple, they achieve comparable performance to \textbf{J-step Lookahead}. Moreover, E\textsuperscript{2}M\textsuperscript{2} with PSI is just slightly worse than E\textsuperscript{2}M\textsuperscript{2} with FSI.

\begin{figure}[!htb]
	\centering
	\subfigure[Average delay cost]{\label{average_delay_cost}
		\includegraphics[width=0.4\textwidth]{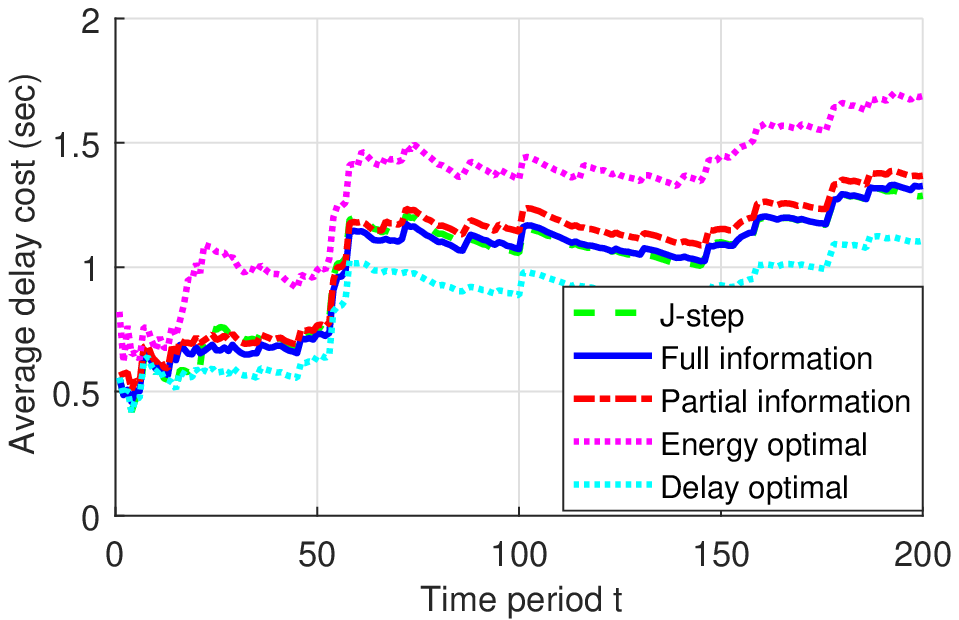}}
	\hspace{0in}
	\subfigure[Total energy consumption]{\label{total_energy_consumption}
		\includegraphics[width=0.4\textwidth]{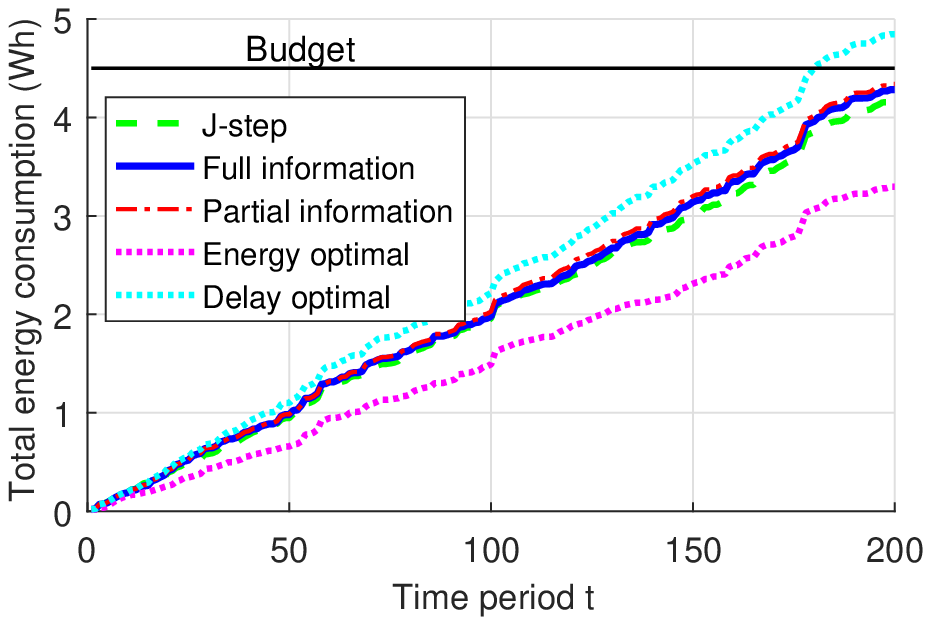}}
	\caption{Performance of E$^2$M$^2$ ($V=0.01$, $\alpha B=120$, $K=100$)}
	\label{Performance_E2M2}
\vspace{-0.2in}
\end{figure}

\subsection{E\textsuperscript{2}M\textsuperscript{2}  with PSI}
We now take a further look at E\textsuperscript{2}M\textsuperscript{2}  with PSI and illustrate the performance of UCB1. Figure \ref{ucb_average_delay_cost} shows that by augmenting E\textsuperscript{2}M\textsuperscript{2} with UCB1, the BS-side information can be learned very quickly and the convergence of the average weighted delay and energy cost to the optimal solution of \textbf{P3} is fast. Figure \ref{switching_times} shows that the number of switchings between BSs is small during the learning process, even when the number of time slots in a period is large. This is much desired as frequent handover is avoided.

\begin{figure}[!htb]
	\centering
	\subfigure[Converence of average delay cost]{\label{ucb_average_delay_cost}
		\includegraphics[width=0.22\textwidth]{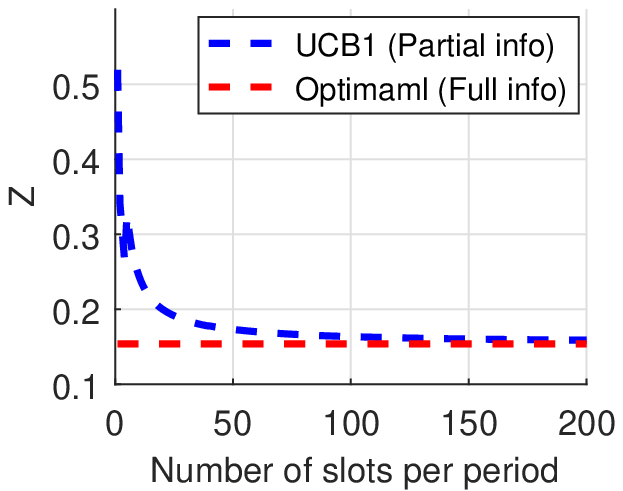}}
	\hspace{0in}
	\subfigure[Switching times]{\label{switching_times}
		\includegraphics[width=0.22\textwidth]{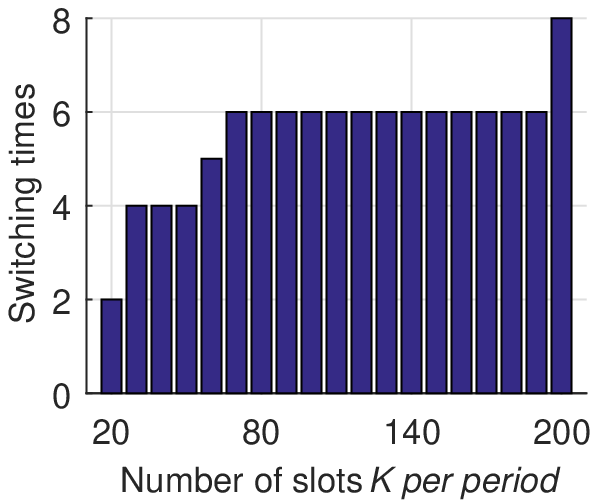}}
	\caption{Dynamics of UCB1}
	\label{dynamics_UCB1}
\end{figure}

\subsection{Impact of energy budget}
Figure \ref{Battery} shows the impact of the energy budget. When the energy budget is large, E\textsuperscript{2}M\textsuperscript{2} achieves the optimal delay performance since the energy constraint is always satisfied. However, it is possible there is no feasible solution when the energy budget is too low, in which case the energy constraint is violated. In between, E\textsuperscript{2}M\textsuperscript{2} makes trade-off between delay cost and energy consumption.

\begin{figure}[!htb]
	\centering
	\subfigure[Average delay cost]{\label{battery_delay_cost}
		\includegraphics[width=0.4\textwidth]{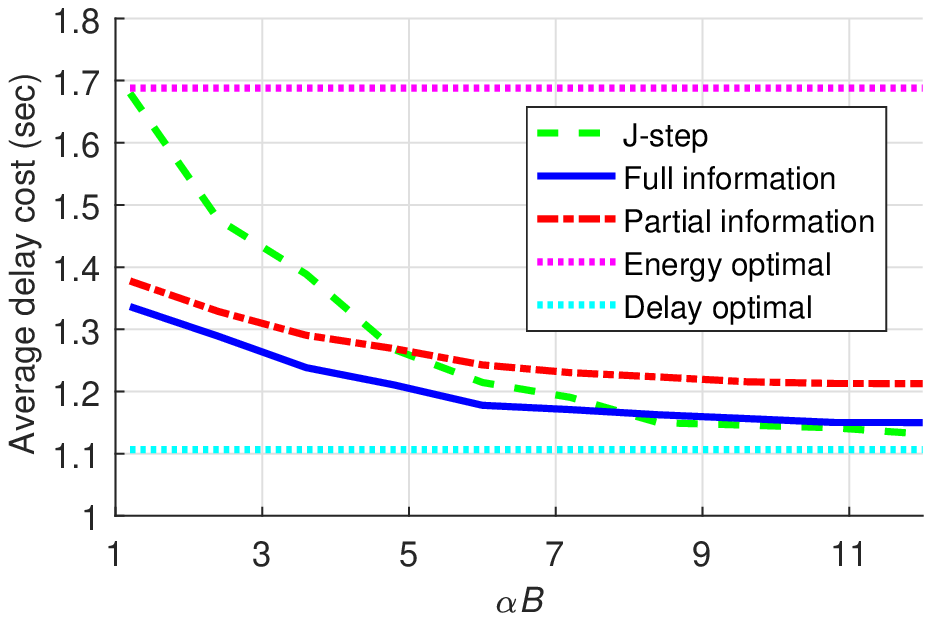}}
	\hspace{0in}
	\subfigure[Total energy consumption]{\label{battery_energy_consumption}
		\includegraphics[width=0.4\textwidth]{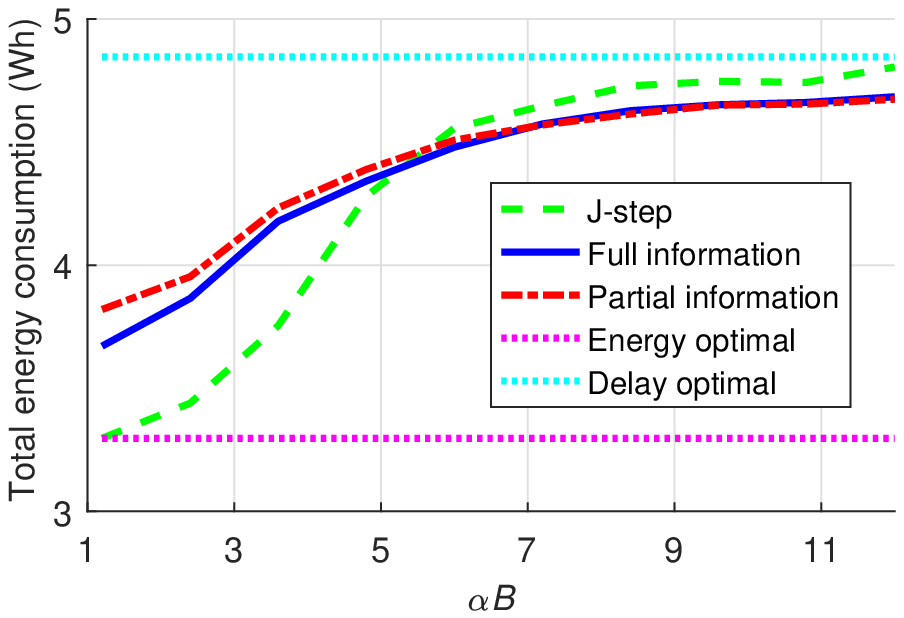}}
	\caption{Impact of battery capacity $\aleph B$}
	\label{Battery}
\vspace{-0.2in}
\end{figure}

\section{Conclusions}
In this paper, we studied the mobility management problem for the next generation mobile networks where the integration of MEC and UDN is envisioned. We developed a novel user-centric mobility management scheme that maximizes the edge computation performance while keeping the user's communication energy consumption low. The proposed scheme works online and effectively handles present and future uncertainties in the system. Future work includes incorporating energy harvesting capabilities into mobile user device and explicitly modeling the handover cost.

\appendix
For notational convenience, we define $y(t) = e(\lambda^t, H^t(a^t, L^t)) - \alpha B/T$ and $d(t) = d(\lambda^t, \mu^t(a^t), s(a^t))$. First, according to the energy deficit queue dynamics, it is easy to see
\begin{align}
q(t+1) - q(t) \geq y(t)
\end{align}
Summing the above over $t = rJ, ..., (r+1)J -1$, using the law of telescoping sums and noting $q(rJ)$ implies
\begin{align}\label{target}
\sum_{t=rJ}^{(r+1)J -1} y(t) \leq q((r+1)J)
\end{align}
where $q((r+1)J)$ is the queue length before reset in period $(r+1)J$. In what follows, we try to bound $q((r+1)J)$.

We define the quadratic Lyapunov function $L(q(t)) \triangleq \frac{1}{2}q^2(t)$.  Squaring the queuing dynamics equation results in the following bound
\begin{align}
q^2(t+1) &\leq  (q(t) + y(t))^2 \nonumber\\
&= q^2(t) + y^2(t) + 2q(t)y(t)
\end{align}
Therefore, the 1-period Lyapunov drift $\Delta_1(t)$ satisifies
\begin{align}
\Delta_1(t) = L(q(t+1)) - L(q(t)) \leq \frac{1}{2} y^2(t) + q(t) y(t)
\end{align}
Now define $U$ as a positive constant that upper bounds $\frac{1}{2} y^2(t)$. Such a constant exists because of the boundedness assumption. Then it can be easily shown that
\begin{align}\label{driftineq}
\Delta_1(t) + V\cdot d(t) \leq U + V\cdot d(t) + q(t)\cdot y(t)
\end{align}
The online algorithm in line 5 of Algorithm 1 actually minimizes the upper bound on the 1-period Lyapunov drift plus a weighted delay cost shown on the right hand side of \eqref{driftineq}. Following \eqref{driftineq}, the $J$-period drift $\Delta_J(rJ) \triangleq L(q((r+1)J)) - L(q(rJ))$ satisfies
\begin{align}
&\Delta_J(rJ) + V \sum_{t=rJ}^{(r+1)J -1}d(t) \\
\leq & UJ + V \sum_{t=rJ}^{(r+1)J -1}d(t) + \sum_{t=rJ}^{(r+1)J-1}q(rJ)\cdot y(t) \nonumber\\
& + \sum_{t=rJ}^{(r+1)J-1} (q(t) - q(rJ))\cdot y(t) \nonumber\\
\leq &UJ(J+1) + V \sum_{t=rJ}^{(r+1)J -1}d(t) + \sum_{t=rJ}^{(r+1)J-1}q(rJ)\cdot y(t) \nonumber
\end{align}
where the second inequality is because $q(t) - q(rJ) \leq (t - rJ) y_\text{max}$, and hence the last term on the right hand side satisfies
\begin{align}
\sum_{t=rJ}^{(r+1)J-1} (q(t) - q(rJ))\cdot y(t) \leq \frac{J^2}{2}y^2_\text{max} \leq J^2 U
\end{align}
By applying E\textsuperscript{2}M\textsuperscript{2} on the left-hand side and considering the optimal $J$-step lookahead algorithm on the right-hand side, we obtain the following
\begin{align}\label{V}
&\Delta_J(rJ) + V_r \sum_{t=rJ}^{(r+1)J -1}d^*(t) \nonumber\\
\leq& UJ(J+1) + V_r J D^*_r + CJ
\end{align}
where $d^*(t)$ is the delay cost achieved by E\textsuperscript{2}M\textsuperscript{2} at time $t$ and $CJ$ is due to the approximate solution of \textbf{P3}.  Therefore,
\begin{align}
&q((r+1)J) = \sqrt{2\Delta_J(rJ)} \nonumber \\
\leq &\sqrt{2UJ(J+1) + CJ + V_r J D^*_r }
\end{align}
Then, by \eqref{target}, we have
\begin{align}
\sum_{t=rJ}^{(r+1)J-1} y(t) \leq \sqrt{2UJ(J+1) + CJ + V_r J D^*_r}
\end{align}
By summing over $r = 0, 1, ..., R-1$ we prove part (2) of Theorem 1.

By dividing both sides of \eqref{V} by $V_r$ and considering $q(rJ) = 0$, it follows that
\begin{align}
\sum_{t=rJ}^{(r+1)J -1}d^*(t) \leq JD^*_r + \frac{UJ(J+1) + CJ}{V_r}
\end{align}
Thus, by summing over $r = 0, 1, ..., R-1$ and dividing both sides by $RJ$, we prove part (1) of Theorem 1.



%
%
%

\bibliographystyle{IEEEtran}
\bibliography{refs}

\begin{thebibliography}{10}
\providecommand{\url}[1]{#1}
\csname url@samestyle\endcsname
\providecommand{\newblock}{\relax}
\providecommand{\bibinfo}[2]{#2}
\providecommand{\BIBentrySTDinterwordspacing}{\spaceskip=0pt\relax}
\providecommand{\BIBentryALTinterwordstretchfactor}{4}
\providecommand{\BIBentryALTinterwordspacing}{\spaceskip=\fontdimen2\font plus
\BIBentryALTinterwordstretchfactor\fontdimen3\font minus
  \fontdimen4\font\relax}
\providecommand{\BIBforeignlanguage}[2]{{%
\expandafter\ifx\csname l@#1\endcsname\relax
\typeout{** WARNING: IEEEtran.bst: No hyphenation pattern has been}%
\typeout{** loaded for the language `#1'. Using the pattern for}%
\typeout{** the default language instead.}%
\else
\language=\csname l@#1\endcsname
\fi
#2}}
\providecommand{\BIBdecl}{\relax}
\BIBdecl

\bibitem{quek2013small}
T.~Q. Quek, G.~de~la Roche, I.~Guvenc, and M.~Kountouris, \emph{Small cell
  networks: Deployment, PHY techniques, and resource management}.\hskip 1em
  plus 0.5em minus 0.4em\relax Cambridge University Press, 2013.

\bibitem{chih2014toward}
I.~Chih-Lin, C.~Rowell, S.~Han, Z.~Xu, G.~Li, and Z.~Pan, ``Toward green and
  soft: a 5g perspective,'' \emph{IEEE Communications Magazine}, vol.~52,
  no.~2, pp. 66--73, 2014.

\bibitem{shi2016edge}
W.~Shi, J.~Cao, Q.~Zhang, Y.~Li, and L.~Xu, ``Edge computing: Vision and
  challenges,'' \emph{IEEE Internet of Things Journal}, vol.~3, no.~5, pp.
  637--646, Oct 2016.

\bibitem{roman2016mobile}
R.~Roman, J.~Lopez, and M.~Mambo, ``Mobile edge computing, fog et al.: A survey
  and analysis of security threats and challenges,'' \emph{arXiv preprint
  arXiv:1602.00484}, 2016.

\bibitem{chen2014requirements}
S.~Chen and J.~Zhao, ``The requirements, challenges, and technologies for 5g of
  terrestrial mobile telecommunication,'' \emph{IEEE Communications Magazine},
  vol.~52, no.~5, pp. 36--43, 2014.

\bibitem{scf2015}
{Small Cell Forum}, ``Small cell forum to play key role in mobile edge
  computing,''
  \url{http://www.smallcellforum.org/blog/small-cell-forum-play-key-role-mobile-edge-computing/},
  2015.

\bibitem{neely2010stochastic}
M.~J. Neely, ``Stochastic network optimization with application to
  communication and queueing systems,'' \emph{Synthesis Lectures on
  Communication Networks}, vol.~3, no.~1, pp. 1--211, 2010.

\bibitem{auer2002finite}
P.~Auer, N.~Cesa-Bianchi, and P.~Fischer, ``Finite-time analysis of the
  multiarmed bandit problem,'' \emph{Machine learning}, vol.~47, no. 2-3, pp.
  235--256, 2002.

\bibitem{huang2012dynamic}
D.~Huang, P.~Wang, and D.~Niyato, ``A dynamic offloading algorithm for mobile
  computing,'' \emph{IEEE Transactions on Wireless Communications}, vol.~11,
  no.~6, pp. 1991--1995, 2012.

\bibitem{satyanarayanan2009case}
M.~Satyanarayanan, P.~Bahl, R.~Caceres, and N.~Davies, ``The case for vm-based
  cloudlets in mobile computing,'' \emph{IEEE pervasive Computing}, vol.~8,
  no.~4, pp. 14--23, 2009.

\bibitem{shen2015silence}
C.~Shen, J.~Xu, and M.~van~der Schaar, ``Silence is gold: Strategic
  interference mitigation using tokens in heterogeneous small cell networks,''
  \emph{IEEE Journal on Selected Areas in Communications}, vol.~33, no.~6, pp.
  1097--1111, 2015.

\bibitem{torregoza2010joint}
J.~Torregoza, R.~Enkhbat, and W.-J. Hwang, ``Joint power control, base station
  assignment, and channel assignment in cognitive femtocell networks,''
  \emph{EURASIP Journal on wireless communications and networking}, vol. 2010,
  no.~1, p.~1, 2010.

\bibitem{ye2013user}
Q.~Ye, B.~Rong, Y.~Chen, M.~Al-Shalash, C.~Caramanis, and J.~G. Andrews, ``User
  association for load balancing in heterogeneous cellular networks,''
  \emph{IEEE Transactions on Wireless Communications}, vol.~12, no.~6, pp.
  2706--2716, 2013.

\bibitem{mesodiakaki2014energy}
A.~Mesodiakaki, F.~Adelantado, L.~Alonso, and C.~Verikoukis, ``Energy-efficient
  context-aware user association for outdoor small cell heterogeneous
  networks,'' in \emph{2014 IEEE International Conference on Communications
  (ICC)}.\hskip 1em plus 0.5em minus 0.4em\relax IEEE, 2014, pp. 1614--1619.

\bibitem{bottai2014energy}
C.~Bottai, C.~Cicconetti, A.~Morelli, M.~Rosellini, and C.~Vitale,
  ``Energy-efficient user association in extremely dense small cell networks,''
  in \emph{Networks and Communications (EuCNC), 2014 European Conference
  on}.\hskip 1em plus 0.5em minus 0.4em\relax IEEE, 2014, pp. 1--5.

\bibitem{althunibat2014handover}
S.~Althunibat, K.~Kontovasilis, and F.~Granelli, ``A handover policy for energy
  efficient network connectivity through proportionally fair access,'' in
  \emph{European Wireless 2014; 20th European Wireless Conference; Proceedings
  of}.\hskip 1em plus 0.5em minus 0.4em\relax VDE, 2014, pp. 1--6.

\bibitem{shen2016non}
C.~Shen, C.~Tekin, and M.~van~der Schaar, ``A non-stochastic learning approach
  to energy efficient mobility management,'' \emph{IEEE Journal on Selected
  Areas in Communications}, 2016.

\bibitem{boccardi2014five}
F.~Boccardi, R.~W. Heath, A.~Lozano, T.~L. Marzetta, and P.~Popovski, ``Five
  disruptive technology directions for 5g,'' \emph{IEEE Communications
  Magazine}, vol.~52, no.~2, pp. 74--80, 2014.

\bibitem{ren2013coca}
S.~Ren and Y.~He, ``Coca: Online distributed resource management for cost
  minimization and carbon neutrality in data centers,'' in \emph{Proceedings of
  the International Conference on High Performance Computing, Networking,
  Storage and Analysis}.\hskip 1em plus 0.5em minus 0.4em\relax ACM, 2013,
  p.~39.

\bibitem{access2010further}
E.~U. T.~R. Access, ``Further advancements for e-utra physical layer aspects,''
  \emph{3GPP Technical Specification TR}, vol.~36, p.~V2, 2010.

\end{thebibliography}

\end{document}